\documentclass[journal,onecolumn]{IEEEtran}

\usepackage{amsmath,amssymb,amsthm,booktabs}
\usepackage[shortlabels]{enumitem}
\usepackage{cite}

\newtheorem{thm}{Theorem}
\newtheorem{lem}[thm]{Lemma}
\newtheorem{prop}[thm]{Proposition}
\newtheorem{cor}[thm]{Corollary}

\newtheorem{remark}{Remark}

\newcommand{\Ff}{{\mathbb F}}
\newcommand{\cc}{{\mathcal C}}
\newcommand{\cD}{{\mathcal D}}

\newcommand{\dsb}[1]{[\![#1]\!]}

\begin{document}

\title{Quantum Locally Repairable Codes from Negacyclic and Repeated-Root Cyclic Codes over Small Fields}

\author{Ruipan~Yang, Qiang~Fu, and~Liangdong~Lu% <-this % stops a space
\thanks{The authors are with Air Force Engineering University, Xi'an, China. e-mail: yangruipan@aliyun.com.}% <-this % stops a space
\thanks{Manuscript received XX XX, 2026; revised XX XX, 2026.}}

% The paper headers
\markboth{Submitted for publication}%
{Yang \MakeLowercase{\textit{et al.}}: Quantum Locally Repairable Codes from Negacyclic and Repeated-Root Cyclic Codes over Small Fields}

\maketitle

\begin{abstract}
Quantum locally recoverable codes (qLRCs), introduced recently by Golowich and Guruswami, allow any single-qudit erasure to be recovered from a small set of other qudits. Most known constructions require a large alphabet. We systematically investigate qLRCs obtained, via the CSS construction, from classical constacyclic codes over small fields $\Ff_q$ with $q\in\{2,3,4,5,7\}$. First, we prove that a nonzero dual-containing $\lambda$-constacyclic code exists only when $\lambda^2=1$, so that negacyclic and (repeated-root) cyclic codes exhaust the constacyclic route to qLRCs. Second, we show that the locality of a constacyclic code equals the minimum distance of its dual minus one, and we give a simple purity criterion for the resulting quantum codes. Third, we show that odd-like duadic codes whose splitting is given by $\mu_{-1}$ yield pure qLRCs; specializing to $q$-ary quadratic residue codes of prime length $p\equiv 3 \pmod 4$ gives an infinite family of pure qLRCs with unbounded minimum distance and certified locality. Finally, by means of concrete computations, we obtain a classification of qLRCs from cyclic, negacyclic, and repeated-root cyclic codes of moderate lengths, which contains the first binary qLRCs from repeated-root cyclic codes and many parameter sets that cyclic codes cannot attain.
\end{abstract}

\begin{IEEEkeywords}
Constacyclic code, quadratic residue code, quantum CSS code, quantum locally repairable code.
\end{IEEEkeywords}

\IEEEpeerreviewmaketitle

\section{Introduction}

\IEEEPARstart{C}{lassical} locally repairable codes (LRCs) \cite{Gopalan2012} are a cornerstone of modern distributed storage: an $[n,k,d]_q$ linear code has \emph{locality} $r$ if every code symbol can be recovered from at most $r$ other symbols. The parameters of an LRC with locality $r$ obey the Singleton-like bound
\begin{equation}\label{C-Singleton}
d\le n-k-\left\lceil\frac{k}{r}\right\rceil+2,
\end{equation}
and a large body of work provides optimal or near-optimal constructions; see, e.g., \cite{Tamo2014,Jin2019,Chen2018,Luo2019} and the references therein.

Quantum locally recoverable codes (qLRCs) were introduced by Golowich and Guruswami \cite{Golowich2023} as the quantum analogue of LRCs, motivated by large-scale quantum storage and by connections to quantum LDPC codes. An $\dsb{n,\kappa,\delta}_q$ qLRC with locality $r$ is a quantum error-correcting code in which every erased qudit can be recovered by a recovery channel acting on at most $r$ other qudits. In the stabilizer formalism, locality is certified by the existence of low-weight stabilizers anchored at each coordinate. Building on the CSS construction, Luo, Chen, Ezerman, and Ling \cite{Luo2023} derived bounds on qLRCs from their classical ingredients, including the quantum Singleton-like bound
\begin{equation}\label{Q-Singleton}
2\delta\le n-\kappa-2\left\lceil\frac{\kappa}{r}\right\rceil+4,
\end{equation}
and characterized the optimal pure qLRCs attaining \eqref{Q-Singleton} with equality. Galindo et al.~\cite{Galindo2024} introduced quantum $(r,\delta)$-LRCs and gave a necessary and sufficient condition for local recoverability of stabilizer codes.

Known explicit qLRC constructions \cite{Golowich2023,Luo2023,Galindo2024,Sharma2024,Li2025,Rajpurohit2026} predominantly require a large field size $q$ (GRS-type, algebraic-geometry, matrix-product, or good-polynomial ingredients). By contrast, small-alphabet qLRCs---which are the most relevant ones for physical qubit or qutrit systems---remain scarce; to the best of our knowledge, the only systematic binary families appear in the very recent work \cite{Stylianou2026}, which is based on subset-inclusion matrices rather than algebraic code families. We note that duadic constacyclic codes over $\Ff_4$ were used in \cite{Dastbasteh2023} to construct binary quantum stabilizer codes with growing minimum distance, but \emph{without any local repair property}; quantum codes from duadic codes more generally, e.g., \cite{Aly2007}, likewise do not address locality. To our knowledge, the present work is the first to use duadic and QR codes for quantum \emph{locally repairable} codes.

In this paper, we construct quantum locally repairable codes from cyclic, negacyclic, and repeated-root cyclic codes over small fields. Our work is as follows.
\begin{enumerate}
\item \textbf{A classification theorem} (Theorem~\ref{thm:lambda}): a nonzero dual-containing $\lambda$-constacyclic code exists only when $\lambda^2=1$. This extends the self-dual case settled by Blackford \cite[Corollary~2]{Blackford2013} to the dual-containing condition required by the CSS construction. Hence the constacyclic route to qLRCs is exhausted by cyclic and negacyclic codes, including repeated-root ones.
\item \textbf{Two structural lemmas} (Lemmas~\ref{lem:locality} and~\ref{lem:pure}): for a constacyclic code $\cc$, the locality equals $d(\cc^{\perp})-1$; and the CSS quantum code from a dual-containing $\cc$ with $d(\cc)<d(\cc^\perp)$ is pure with $\delta=d(\cc)$. Together they reduce the construction of pure qLRCs to the computation of $d(\cc)$ and $d(\cc^\perp)$.
\item \textbf{An infinite family from duadic codes} (Theorem~\ref{thm:duadic}): every odd-like duadic code whose splitting is given by $\mu_{-1}$ and satisfying $d<d^\perp$ yields a pure qLRC with locality $r=d^\perp-1$. In particular (Corollary~\ref{cor:QR}), for every prime $p\equiv 3 \pmod 4$ and every prime power $q$ that is a quadratic residue modulo $p$, the $q$-ary QR code of length $p$ yields a pure qLRC $\dsb{p,1,d}_q$ with $r=d^\perp-1$, where $d^\perp=d+1$ in all computed instances.
\item \textbf{An extensive computational classification} (Theorem~\ref{thm:class}) for $q\in\{2,3,4,5,7\}$ and moderate lengths (Tables I--V in Section~\ref{sec:tables}), containing the first binary qLRCs from repeated-root cyclic codes and parameter sets attainable only by negacyclic codes. All computational results are reproducible by the Magma programs described in the Appendix.
\end{enumerate}

The paper is organized as follows. Section~\ref{sec:pre} collects the preliminaries. Section~\ref{sec:lemmas} proves the classification theorem and the two structural lemmas. Section~\ref{sec:QR} presents the duadic and QR-code family. Section~\ref{sec:tables} reports the classification theorem and tables, and discusses several sporadic near-optimal codes. Section~\ref{sec:con} concludes with open problems.

\section{Preliminaries}\label{sec:pre}

\subsection{Classical LRCs}

Let $q$ be a prime power and $\Ff_q$ the finite field with $q$ elements. An $[n,k,d]_q$ linear code $\cc$ has \emph{(all-symbol) locality} $r$ if for every $i\in[n]$ there exists a codeword $\bar{\mathbf c}\in\cc^{\perp}$ of the Euclidean dual with $\mathrm{wt}(\bar{\mathbf c})\le r+1$ and $i\in\mathrm{supp}(\bar{\mathbf c})$. Such a word provides a recovery relation for the $i$-th symbol.

\subsection{Quantum LRCs from CSS}

We use the standard CSS construction \cite{Ketkar2006}. If $\cc$ is an $[n,k,d]_q$ linear code with $\cc^{\perp}\subsetneq \cc$, then there exists an $\dsb{n,\kappa,\delta}_q$ quantum code with $\kappa=2k-n$ and $\delta=\mathrm{wt}(\cc\setminus\cc^\perp)$; the code is \emph{pure} if $\delta=d$. The following is due to Golowich--Guruswami \cite{Golowich2023} and Luo et al.~\cite[Corollary~1]{Luo2023}.

\begin{prop}[{\cite[Corollary~1]{Luo2023}}]\label{prop:css}
Let $\cc$ be an $[n,k,d]_q$ dual-containing linear code with locality $r$. Then there exists an $\dsb{n,2k-n,\delta}_q$ qLRC with locality $r$, where $\delta=\mathrm{wt}(\cc\setminus\cc^\perp)$; its parameters satisfy \eqref{Q-Singleton}.
\end{prop}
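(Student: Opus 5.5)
The plan is to realize the qLRC as the CSS code $\mathrm{CSS}(\cc,\cc)$, i.e.\ taking both the $X$-type and $Z$-type classical codes equal to $\cc$. First I would establish the code parameters, then show that the locality of $\cc$ carries over to the quantum code, and finally deduce that the Singleton-like bound \eqref{Q-Singleton} holds.

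\textbf{Parameters.} Since $\cc^\perp\subseteq\cc$, the stabilizer group is generated by the $X$-type operators $X^{\mathbf a}$ and the $Z$-type operators $Z^{\mathbf b}$ with $\mathbf a,\mathbf b\in\cc^\perp$. These operators commute because $\cc^\perp\subseteq\cc=(\cc^\perp)^\perp$, so $\mathbf a\cdot\mathbf b=0$. The stabilizer has $2(n-k)$ independent generators, which gives $\kappa=n-2(n-k)=2k-n$. A nontrivial logical operator is an error $X^{\mathbf a}Z^{\mathbf b}$ that commutes with the stabilizer but is not in it. This means $\mathbf a,\mathbf b\in\cc$ and not both lie in $\cc^\perp$ (up to phases). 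Its weight is at least $\mathrm{wt}(\cc\setminus\cc^\perp)$, and this value is attained by a pure $X$- or $Z$-type operator. Hence $\delta=\mathrm{wt}(\cc\setminus\cc^\perp)$. All of this is the standard argument of \cite{Ketkar2006}.

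\textbf{Locality.} Fix a coordinate $i$. By hypothesis there is $\bar{\mathbf c}\in\cc^\perp$ with $i\in\mathrm{supp}(\bar{\mathbf c})$ and $\mathrm{wt}(\bar{\mathbf c})\le r+1$. Then $X^{\bar{\mathbf c}}$ and $Z^{\bar{\mathbf c}}$ are stabilizers supported on $S=\mathrm{supp}(\bar{\mathbf c})$, and both act nontrivially on qudit $i$. I would then invoke the local-recoverability criterion of Golowich--Guruswami (equivalently \cite{Galindo2024}). An erasure of qudit $i$ is correctable from $S$ precisely when no logical information is localized on $S$. Equivalently, every Pauli operator on qudit $i$ is "detected'' by a stabilizer supported in $S$, meaning that for each nontrivial Pauli $P$ on qudit $i$ some stabilizer in $\{X^{\alpha\bar{\mathbf c}}Z^{\beta\bar{\mathbf c}}\}$ fails to commute with $P$. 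This holds here: writing $P=X^{u}Z^{v}$ on qudit $i$ with $(u,v)\ne(0,0)$, the commutation phases with $Z^{\bar{\mathbf c}}$ and $X^{\bar{\mathbf c}}$ are governed by $u\bar c_i$ and $v\bar c_i$, and at least one is nonzero because $\bar c_i\ne0$. The Knill--Laflamme conditions for the single erasure at $i$ then hold with the recovery acting only on $S\setminus\{i\}$, which has at most $r$ qudits. The main subtlety is making the recovery channel genuinely act only on $S\setminus\{i\}$. I would construct it explicitly: measure the syndromes of $X^{\bar{\mathbf c}}$ and $Z^{\bar{\mathbf c}}$ (and their scalar multiples) using the qudits of $S$, then correct qudit $i$. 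Alternatively, a cleaner route is to decode the classical local parity relation $\bar{\mathbf c}$ in both the computational and Fourier bases.

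\textbf{Bound.} Finally, \eqref{Q-Singleton} holds for any qLRC with locality $r$ by the result of \cite{Luo2023}, so it applies to the code just constructed. The main obstacle is the locality step, specifically the translation from the classical dual relation to a quantum recovery channel. Everything else is bookkeeping.
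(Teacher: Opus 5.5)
Your proposal is correct and is essentially the standard argument behind this proposition, which the paper does not prove but imports from \cite{Luo2023}, building on the stabilizer criterion for locality in \cite{Golowich2023}: you take $\mathrm{CSS}(\cc,\cc)$, note that a single local check $\bar{\mathbf c}\in\cc^\perp$ yields both $X$- and $Z$-type stabilizers (with all $\Ff_q$-multiples) on one support $S$ of size at most $r+1$, and recover qudit $i$ by syndrome measurement on $S$; this shared support is exactly why the locality stays $r$ rather than up to $2r$. The one inaccurate remark is your heuristic that erasure of $i$ is recoverable from $S$ ``precisely when no logical information is localized on $S$''. It is false: in the $\dsb{5,1,3}_5$ code of Table~\ref{tab:optimal}, every $4$-set supports a weight-$3$ logical operator, yet the locality is $3$. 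Your argument never uses that heuristic, however; it relies on the correct criterion that every Pauli on qudit $i$ fails to commute with some stabilizer supported in $S$.
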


\subsection{Constacyclic codes}

Let $\lambda\in\Ff_q^*$. A linear code $\cc$ of length $n$ is \emph{$\lambda$-constacyclic} if it is invariant under the constashift $T_\lambda(c_0,c_1,\dots,c_{n-1})=(\lambda c_{n-1},c_0,\dots,c_{n-2})$. Such codes are ideals of $\Ff_q[x]/\langle x^n-\lambda\rangle$, with generator polynomial $g(x)\mid x^n-\lambda$. The cases $\lambda=1$ and $\lambda=-1$ give \emph{cyclic} and \emph{negacyclic} codes, respectively; when $\gcd(n,q)>1$ one speaks of \emph{repeated-root} codes. The Euclidean dual of a $\lambda$-constacyclic code is $\lambda^{-1}$-constacyclic.

When $\gcd(n,q)=1$, a constacyclic code is described by its \emph{defining set}: for cyclic codes, $D=\{i\in\mathbb Z_n: g(\alpha^i)=0\}$ where $\alpha$ is a primitive $n$-th root of unity, and $D$ is a union of $q$-cyclotomic cosets modulo $n$; for negacyclic codes one works modulo $2n$ with odd residues. The BCH bound applies (see, e.g., \cite[Ch.~5]{Huffman2003}): if $D$ contains $t-1$ consecutive integers, then $d\ge t$.

\section{The Scope of the Constacyclic Route and Two Structural Lemmas}\label{sec:lemmas}

\begin{thm}\label{thm:lambda}
Let $\cc$ be a nonzero $\lambda$-constacyclic code over $\Ff_q$. If $\cc^{\perp}\subseteq \cc$, then $\lambda^2=1$.
\end{thm}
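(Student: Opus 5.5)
The plan is to exploit the fact, recalled in the preliminaries, that the Euclidean dual of a $\lambda$-constacyclic code is $\lambda^{-1}$-constacyclic. Suppose $\cc\neq 0$ is $\lambda$-constacyclic with $\cc^\perp\subseteq\cc$. Then $\cc^\perp$ is invariant under $T_{\lambda^{-1}}$. I would show that $\cc$ itself is also invariant under $T_{\lambda^{-1}}$. Since $\cc$ is invariant under both shifts, it is invariant under $T_\lambda - T_{\lambda^{-1}}$, which sends $\mathbf c$ to $((\lambda-\lambda^{-1})c_{n-1},0,\dots,0)$. If $\lambda\ne\lambda^{-1}$, this forces $e_0\in\cc$ whenever some codeword has $c_{n-1}\neq0$. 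Applying the shifts then puts every unit vector in $\cc$, so $\cc=\Ff_q^n$. Since $\cc\neq 0$, some constashift of a nonzero codeword has a nonzero last coordinate, so that case really does occur.

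The key step, and the main obstacle, is passing invariance from $\cc^\perp$ up to $\cc$. Dualizing is the natural tool. The map $T_\lambda$ has adjoint (with respect to the Euclidean form) equal to the inverse of $T_{\lambda^{-1}}$, up to identification. So if $\cc^\perp$ is $T_{\lambda^{-1}}$-invariant, then $\cc=(\cc^\perp)^\perp$ is invariant under the adjoint of $T_{\lambda^{-1}}$, which is $T_\lambda^{-1}$. That only gives $\lambda$-invariance again, which we already know. So dualizing alone is circular, and the containment $\cc^\perp\subseteq\cc$ must enter in an essential way.

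Instead I would work algebraically in $R=\Ff_q[x]/\langle x^n-\lambda\rangle$. Write $\cc=\langle g\rangle$ with $g\mid x^n-\lambda$ and check polynomial $h=(x^n-\lambda)/g$. Then $\cc^\perp=\langle h^*\rangle$, where $h^*$ is the normalized reciprocal, and $h^*\mid x^n-\lambda^{-1}$. The condition $\cc^\perp\subseteq\cc$ should then read $g\mid h^*$ as polynomials. This needs care, because the two codes live in different quotient rings and the ideal containment is not automatic. I would argue it via the codeword $h^*$ itself: it has degree $<n$ and lies in $\cc$, so $g\mid h^*$ in $\Ff_q[x]$.

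Now $g\mid h^*\mid x^n-\lambda^{-1}$ and $g\mid x^n-\lambda$. Hence $g$ divides the difference $(x^n-\lambda^{-1})-(x^n-\lambda)=\lambda-\lambda^{-1}$. If $\lambda^2\neq1$, this difference is a nonzero constant, so $g$ is a unit, i.e. $g=1$ and $\cc=R$. But then $h=x^n-\lambda$, and $\cc^\perp=0$.

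Whether $\cc=\Ff_q^n$ is allowed depends on the convention. I would note that the CSS setting, as in Proposition~\ref{prop:css}, requires $\cc^\perp\subsetneq\cc$ with a nontrivial dual. Alternatively, I would interpret ``nonzero'' as applying to $\cc^\perp$, so that $\cc\neq\Ff_q^n$ is the intended hypothesis. With that convention, the theorem follows: any dual-containing $\lambda$-constacyclic code other than the full space forces $\lambda^2=1$, and in the trivial case we state the exception explicitly.
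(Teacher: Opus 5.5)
Your argument is correct for the statement with the extra hypothesis $\cc\neq\Ff_q^n$, and it takes a different route from the paper.

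\textbf{The two routes.} The paper argues by shift-invariance. It asserts that $\cc^\perp$ is invariant under both $T_\lambda$ and $T_{\lambda^{-1}}$. It then computes $T_\lambda^{\,j}T_{\lambda^{-1}}^{\,-j}$ as scaling the first $j$ coordinates by $\lambda^2$, and uses a support-reduction argument to force $\cc^\perp=\Ff_q^n$ when $\lambda^2\neq1$.

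You instead use the description $\cc^\perp=\langle h^*\rangle$ and need only the single vector $h^*\in\cc^\perp\subseteq\cc$. This gives $g\mid h^*\mid x^n-\lambda^{-1}$, so $g$ divides both $x^n-\lambda$ and $x^n-\lambda^{-1}$. Their difference is a nonzero constant when $\lambda\neq\lambda^{-1}$.

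\textbf{What each buys.} Your version depends on the standard polynomial description of the dual, but it makes explicit exactly how the containment enters. The paper's route avoids that machinery. However, it justifies $T_\lambda$-invariance of $\cc^\perp$ only by ``$\cc^\perp\subseteq\cc$''. That yields $T_\lambda(\cc^\perp)\subseteq\cc$, not $T_\lambda(\cc^\perp)\subseteq\cc^\perp$. This is precisely the circularity you flagged, so your argument closes a step the paper's proof glosses over.

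\textbf{Your first idea also works.} Apply $T_\lambda-T_{\lambda^{-1}}$ only to words $\mathbf d\in\cc^\perp$. Then $T_\lambda\mathbf d\in\cc$ and $T_{\lambda^{-1}}\mathbf d\in\cc^\perp\subseteq\cc$, so $(\lambda-\lambda^{-1})d_{n-1}\mathbf e_0\in\cc$. No $T_{\lambda^{-1}}$-invariance of $\cc$ is needed; you only need some $\mathbf d\in\cc^\perp$ with $d_{n-1}\neq0$, which exists once $\cc^\perp\neq\{0\}$.

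\textbf{The exceptional case.} You are right, and you should say so plainly rather than appeal to conventions. The code $\cc=\Ff_q^n$ is nonzero, is $\lambda$-constacyclic for every $\lambda$, and contains $\cc^\perp=\{0\}$. So, for instance, $q=5$ and $\lambda=2$ give a counterexample to the statement as written. The strict containment $\cc^\perp\subsetneq\cc$ does not exclude this case either. The correct hypothesis is $\cc^\perp\neq\{0\}$, which together with $\cc^\perp\subseteq\cc$ already implies $\cc\neq\{0\}$. The paper's proof uses this tacitly when it says every coordinate lies in the support of some word of $\cc^\perp$.

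\textbf{Ordering of your proof.} Assume $\deg g\ge1$ first. That is what guarantees $\deg h^*=n-\deg g\le n-1$, so that $h^*$ is a codeword at all before you deduce $g\mid h^*$. As written, you use $\deg h^*<n$ and only afterwards derive $g=1$.
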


\begin{IEEEproof}
Since $\cc$ is $\lambda$-constacyclic, $\cc$ is invariant under $T_\lambda$; since $\cc^\perp$ is $\lambda^{-1}$-constacyclic and $\cc^\perp\subseteq\cc$, the code $\cc^\perp$ is invariant under both $T_\lambda$ and $T_{\lambda^{-1}}$. For $1\le j\le n$, a direct computation gives
\[
T_\lambda^{\,j}\circ T_{\lambda^{-1}}^{\,-j}(c_0,\dots,c_{n-1})=(\lambda^2 c_0,\dots,\lambda^2 c_{j-1},c_j,\dots,c_{n-1}),
\]
so $\cc^\perp$ is invariant under scaling the first $j$ coordinates by $\lambda^2$, for every $j$. Composing such maps for consecutive $j$ shows that $\cc^\perp$ is invariant under scaling \emph{any single coordinate} by $\lambda^2$.

Suppose $\lambda^2\neq 1$. We claim that any linear subspace $\mathcal S\subseteq\Ff_q^n$ invariant under single-coordinate scalings by $\lambda^2$ contains the unit vector $\mathbf e_i$ for every $i$ in the support of any of its members. Indeed, let $\mathbf c\in\mathcal S$ and $i\in\mathrm{supp}(\mathbf c)$; the vector $\mathbf c'-\lambda^2\mathbf c$, where $\mathbf c'$ is $\mathbf c$ with the $i$-th coordinate scaled by $\lambda^2$, belongs to $\mathcal S$, vanishes at $i$, and is nonzero on $\mathrm{supp}(\mathbf c)\setminus\{i\}$; induction on the support size yields the claim. Since $\cc^\perp$ is closed under the constashift, every coordinate is covered by the support of some word of $\cc^\perp$; hence $\cc^\perp=\Ff_q^n$ and $\cc=\{0\}$, a contradiction.
\end{IEEEproof}

\begin{remark}
For \emph{self-dual} constacyclic codes the restriction $\lambda^2=1$ is known \cite[Corollary~2]{Blackford2013}. Theorem~\ref{thm:lambda} extends this necessary condition to \emph{dual-containing} constacyclic codes, which is precisely the condition required by the CSS construction. It also matches the common practice in the quantum constacyclic code literature, where $\lambda=\pm 1$ is assumed from the outset; see, e.g., \cite{Ketkar2006}.
For $\Ff_2$ and $\Ff_4$ only $\lambda=1$ survives, while over fields of odd characteristic both cyclic and negacyclic codes are available.
\end{remark}

\begin{lem}\label{lem:locality}
Let $\cc$ be a $\lambda$-constacyclic code and $d^\perp=d(\cc^\perp)$. Then $\cc$ has all-symbol locality $r=d^\perp-1$.
\end{lem}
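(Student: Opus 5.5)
We prove two inequalities. Locality $r\le d^\perp-1$ follows by exhibiting, for every coordinate $i$, a word of $\cc^\perp$ of weight exactly $d^\perp$ whose support contains $i$. The reverse inequality $r\ge d^\perp-1$ holds because every recovery word is a nonzero word of $\cc^\perp$, so its weight is at least $d^\perp$, that is, $r+1\ge d^\perp$. Here we read ``locality'' as the smallest $r$ admitting recovery sets, and we assume $\cc^\perp\neq\{0\}$; the degenerate case $\cc=\Ff_q^n$ has no recovery words and is excluded.

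\emph{Step 1: the dual is constacyclic.} As recalled in the preliminaries, $\cc^\perp$ is $\lambda^{-1}$-constacyclic. So $\cc^\perp$ is invariant under $T_{\lambda^{-1}}$.

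\emph{Step 2: a minimum-weight word.} Choose $\bar{\mathbf c}\in\cc^\perp$ with $\mathrm{wt}(\bar{\mathbf c})=d^\perp$. Let $j$ be any index in its support.

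\emph{Step 3: shifting preserves weight.} The key observation is that $T_{\lambda^{-1}}$ is a monomial map: it permutes the coordinates cyclically by $i\mapsto i+1 \pmod n$ and multiplies one entry by the nonzero scalar $\lambda^{-1}$. Hence, for every $s$,
\[
\mathrm{wt}\bigl(T_{\lambda^{-1}}^{\,s}\bar{\mathbf c}\bigr)=\mathrm{wt}(\bar{\mathbf c}),
\qquad
\mathrm{supp}\bigl(T_{\lambda^{-1}}^{\,s}\bar{\mathbf c}\bigr)=\mathrm{supp}(\bar{\mathbf c})+s \pmod n .
\]

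\emph{Step 4: covering every coordinate.} Given a target coordinate $i$, take $s\equiv i-j \pmod n$. The word $T_{\lambda^{-1}}^{\,s}\bar{\mathbf c}$ lies in $\cc^\perp$, has weight $d^\perp$, and contains $i$ in its support. It is therefore a repair group of size $d^\perp-1$ for coordinate $i$. This gives $r\le d^\perp-1$, and together with the lower bound, $r=d^\perp-1$.

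The only point needing care is the bookkeeping in Step 3: the scalar $\lambda^{-1}$ never turns a nonzero entry into zero, so the support really is the translate $\mathrm{supp}(\bar{\mathbf c})+s$. There is no genuine obstacle. Note that the argument uses only closure of $\cc^\perp$ under a transitive monomial group; it does not use duality containment, and it does not assume $\gcd(n,q)=1$, so it covers repeated-root codes as well.
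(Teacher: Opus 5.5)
Your proposal is correct and follows essentially the same argument as the paper: the constashifts of a minimum-weight word of the $\lambda^{-1}$-constacyclic dual cover every coordinate, and no nonzero dual word has weight below $d^\perp$. Your explicit check that the support of $T_{\lambda^{-1}}^{\,s}\bar{\mathbf c}$ is exactly the translate $\mathrm{supp}(\bar{\mathbf c})+s$, and your exclusion of the degenerate case $\cc^\perp=\{0\}$, simply make the paper's brief proof more careful.
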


\begin{IEEEproof}
The dual $\cc^\perp$ is $\lambda^{-1}$-constacyclic. Let $\bar{\mathbf c}\in\cc^\perp$ have minimum weight $d^\perp$. Its constashifts $T_{\lambda^{-1}}^{\,j}(\bar{\mathbf c})$, $j=0,\dots,n-1$, are words of $\cc^\perp$ of weight $d^\perp$ whose supports are the cyclic shifts of $\mathrm{supp}(\bar{\mathbf c})$; hence every coordinate is covered by one of them. Since no word of $\cc^\perp$ has weight less than $d^\perp$, each coordinate has locality exactly $d^\perp-1$.
\end{IEEEproof}

\begin{lem}\label{lem:pure}
Let $\cc$ be a dual-containing $[n,k,d]_q$ code with $d<d(\cc^\perp)$. Then the CSS quantum code from $\cc$ is pure and its minimum distance equals $d$.
\end{lem}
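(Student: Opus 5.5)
The plan is to work directly from the definitions in the CSS construction. There $\delta=\mathrm{wt}(\cc\setminus\cc^\perp)$, the minimum weight of a codeword of $\cc$ not lying in $\cc^\perp$, and purity means $\delta=d$. Two inequalities are needed, and the hypothesis $d<d(\cc^\perp)$ is used only to locate a minimum-weight word of $\cc$ outside $\cc^\perp$.

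First, since $\cc\setminus\cc^\perp\subseteq\cc\setminus\{0\}$, every word counted in $\delta$ is a nonzero word of $\cc$. Hence $\delta\ge d$ immediately. For this I assume $\cc^\perp\subsetneq\cc$, as in Proposition~\ref{prop:css}, so that $\cc\setminus\cc^\perp$ is nonempty and $\delta$ is well defined.

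Second, take a codeword $\mathbf c\in\cc$ with $\mathrm{wt}(\mathbf c)=d$. If $\mathbf c$ lay in $\cc^\perp$, it would be a nonzero word of $\cc^\perp$. Its weight would then satisfy $\mathrm{wt}(\mathbf c)\ge d(\cc^\perp)>d$, a contradiction. So $\mathbf c\in\cc\setminus\cc^\perp$, which gives $\delta\le d$.

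Combining the two inequalities yields $\delta=d$, so the CSS code is pure with minimum distance $d$. There is no genuine obstacle here. The only point needing care is the convention that $d(\cc^\perp)$ is the minimum weight of a \emph{nonzero} word of $\cc^\perp$. If $\cc^\perp=\{0\}$, one sets $d(\cc^\perp)=\infty$ and the argument goes through unchanged.
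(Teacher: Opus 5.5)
Your proof is correct and is the paper's argument: the hypothesis $d<d(\cc^\perp)$ forces every minimum-weight word of $\cc$ into $\cc\setminus\cc^\perp$, so $\delta=d$, and you simply write out the trivial bound $\delta\ge d$ that the paper leaves implicit. Your extra assumption $\cc^\perp\subsetneq\cc$ is not needed, because your second step already produces a word in $\cc\setminus\cc^\perp$ from the hypothesis alone.
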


\begin{IEEEproof}
Every word of $\cc^\perp$ has weight at least $d(\cc^\perp)>d$, so all minimum-weight words of $\cc$ lie in $\cc\setminus\cc^\perp$ and $\delta=\mathrm{wt}(\cc\setminus\cc^\perp)=d$.
\end{IEEEproof}

Combining Proposition~\ref{prop:css} with Lemmas~\ref{lem:locality} and~\ref{lem:pure}:

\begin{cor}\label{cor:main}
Let $\cc$ be a dual-containing cyclic, negacyclic, or repeated-root cyclic $[n,k,d]_q$ code with $k>n/2$ and $d<d(\cc^\perp)$. Then there exists a pure qLRC
\[
\dsb{n,\,2k-n,\,d}_q \quad\text{with locality } r=d(\cc^\perp)-1,
\]
whose parameters obey \eqref{Q-Singleton}.
\end{cor}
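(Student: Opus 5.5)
The corollary follows by chaining Proposition~\ref{prop:css} with Lemmas~\ref{lem:locality} and~\ref{lem:pure}. I will verify their hypotheses in order and then read off the parameters.

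\textbf{Step 1 (the CSS hypotheses).} Since $\cc$ is dual-containing, $\cc^\perp\subseteq\cc$. Moreover $\dim\cc^\perp=n-k<k$ because $k>n/2$, so the containment is strict, $\cc^\perp\subsetneq\cc$, and $\kappa=2k-n\ge 1$. Hence the CSS construction of Section~\ref{sec:pre} applies and produces an $\dsb{n,2k-n,\delta}_q$ code with $\delta=\mathrm{wt}(\cc\setminus\cc^\perp)$.

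\textbf{Step 2 (locality).} In each of the three cases, cyclic, negacyclic and repeated-root cyclic, $\cc$ is $\lambda$-constacyclic with $\lambda=\pm1$. Lemma~\ref{lem:locality} therefore gives all-symbol locality $r=d(\cc^\perp)-1$. I should check that the proof of that lemma does not rely on $\gcd(n,q)=1$. It uses only that constashifts of a minimum-weight dual word have cyclically shifted supports, so it covers the repeated-root case as well. Proposition~\ref{prop:css} then yields a qLRC with this locality whose parameters satisfy \eqref{Q-Singleton}.

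\textbf{Step 3 (purity and distance).} Since $d<d(\cc^\perp)$, Lemma~\ref{lem:pure} gives $\delta=d$ and shows that the code is pure. This yields the stated parameters $\dsb{n,2k-n,d}_q$.

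None of the steps is a real obstacle. The only points needing care are the strict inclusion coming from $k>n/2$ (this also excludes the degenerate case $\cc=\cc^\perp$) and the observation that Lemma~\ref{lem:locality} remains valid for repeated-root lengths.
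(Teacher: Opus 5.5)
Your proposal is correct and follows the paper's route exactly: the paper obtains the corollary simply by combining Proposition~\ref{prop:css} with Lemmas~\ref{lem:locality} and~\ref{lem:pure}. Your two extra checks are accurate and make explicit what the paper leaves implicit: that $k>n/2$ forces $\cc^\perp\subsetneq\cc$ and $\kappa\ge 1$, and that the proof of Lemma~\ref{lem:locality} never uses $\gcd(n,q)=1$ and therefore covers repeated-root lengths.
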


\section{An Infinite Family from Duadic and Quadratic Residue Codes}\label{sec:QR}

Duadic codes \cite{Leon1984} generalize quadratic residue codes to composite lengths. Recall that for a length $n$ with $\gcd(n,q)=1$ such that $q$ is a quadratic residue modulo $n$, duadic codes over $\Ff_q$ come in like pairs: two even-like codes $\cc_1,\cc_2$ of dimension $(n-1)/2$ and two odd-like codes $\cD_1,\cD_2$ of dimension $(n+1)/2$, with $\cD_i\supsetneq\cc_i$. The \emph{splitting} of the pair is said to be given by the multiplier $\mu_{-1}$ if $\cc_1\mu_{-1}=\cc_2$; by \cite[Theorem~6.4.2]{Huffman2003}, this happens if and only if $\cc_i^\perp=\cD_i$, in which case $\cD_i$ is dual-containing.

\begin{thm}\label{thm:duadic}
Let $\cD$ be an odd-like duadic code of length $n$ over $\Ff_q$ whose splitting is given by $\mu_{-1}$, with minimum distance $d$ satisfying $d<d(\cD^\perp)$. Then $\cD$ is dual-containing, and there exists a pure qLRC
\[
\dsb{n,\,1,\,d}_q \quad\text{with locality } r=d(\cD^\perp)-1\ge d.
\]
\end{thm}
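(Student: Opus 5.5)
The plan is to reduce everything to Corollary~\ref{cor:main}, or more directly to Proposition~\ref{prop:css} together with Lemmas~\ref{lem:locality} and~\ref{lem:pure}. The corollary's hypotheses then need to be checked one at a time.

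\textbf{Step 1: dual-containment.} Write $\cD=\cD_i$ for the odd-like code of the pair, with even-like partner $\cc_i\subsetneq\cD_i$. Since the splitting is given by $\mu_{-1}$, \cite[Theorem~6.4.2]{Huffman2003} gives $\cc_i^\perp=\cD_i$. Dualizing yields $\cD_i^\perp=\cc_i\subsetneq\cD_i$, so $\cD$ is dual-containing with strict inclusion.

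\textbf{Step 2: parameters and purity.} The odd-like code has $k=(n+1)/2$, so $\kappa=2k-n=1>0$. Duadic codes are cyclic, and $\gcd(n,q)=1$, so Lemma~\ref{lem:locality} gives locality $r=d(\cD^\perp)-1$. Proposition~\ref{prop:css} then yields an $\dsb{n,1,\delta}_q$ qLRC with this locality. The hypothesis $d<d(\cD^\perp)$ lets Lemma~\ref{lem:pure} give purity with $\delta=d$.

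\textbf{Step 3: the bound $r\ge d$.} This is only a reformulation of the hypothesis. Since $d(\cD^\perp)>d$ and the quantities are integers, $d(\cD^\perp)\ge d+1$, hence $r=d(\cD^\perp)-1\ge d$.

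There is no serious obstacle here. The only delicate point is invoking the Huffman--Pless characterization correctly: $\mu_{-1}$ giving the splitting must be equivalent to $\cc_i^\perp=\cD_i$, and not to the other pairing $\cc_1^\perp=\cD_2$, which holds when $\mu_{-1}$ fixes each code. This orientation should be double-checked against the statement already recalled in the paper before Theorem~\ref{thm:duadic}. One should also note that $\cD^\perp=\cc_i$ is the even-like code, so $d(\cD^\perp)$ is the even-like minimum distance. This is what makes $d<d(\cD^\perp)$ a plausible hypothesis (odd-like codes contain the all-one-type words), although it is assumed rather than proved here.
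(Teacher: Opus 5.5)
Your proposal is correct and follows essentially the same route as the paper. \cite[Theorem~6.4.2]{Huffman2003} identifies $\cD^\perp$ with the even-like partner $\cc_i\subsetneq\cD$, which gives strict dual-containment and $\kappa=1$, and Lemmas~\ref{lem:locality} and~\ref{lem:pure} then give the locality and purity. Your orientation check is right, and your integrality argument for $r\ge d$ just makes explicit a step the paper leaves unstated.
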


\begin{IEEEproof}
By \cite[Theorem~6.4.2]{Huffman2003}, $\cD^\perp$ is the even-like subcode of $\cD$, so $\cD$ is dual-containing and $\kappa=2\cdot\frac{n+1}{2}-n=1$. Purity and $\delta=d$ follow from Lemma~\ref{lem:pure} since $d<d(\cD^\perp)$; the locality is $r=d(\cD^\perp)-1$ by Lemma~\ref{lem:locality}.
\end{IEEEproof}

Let now $p$ be an odd prime and $q$ a prime power that is a quadratic residue modulo $p$. The set of quadratic residues modulo $p$ is then a union of $q$-cyclotomic cosets, and the \emph{$q$-ary QR code} $\cc_p$ of length $p$ is the cyclic code with defining set $\mathrm{QR}(p)$; it is the odd-like duadic (indeed, QR) code with parameters $[p,(p+1)/2,d]_q$, where $d^2\ge p$, and even $d^2-d+1\ge p$ when $p\equiv-1\pmod 4$ (the square-root bound \cite[Theorem~6.6.22]{Huffman2003}).

\begin{cor}\label{cor:QR}
Let $p\equiv 3 \pmod 4$ be a prime and $q$ a prime power that is a quadratic residue modulo $p$. Then the $q$-ary QR code $\cc_p$ is dual-containing, and there exists a pure qLRC
\[
\dsb{p,\,1,\,d}_q \quad\text{with locality } r=d^\perp-1,
\]
where $d=d(\cc_p)$ and $d^\perp=d(\cc_p^\perp)\ge d+1$. In particular, this holds for $q=2$ whenever $p\equiv -1\pmod 8$ and for $q=3$ whenever $p\equiv -1\pmod{12}$.
\end{cor}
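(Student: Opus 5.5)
The plan is to reduce the statement to Theorem~\ref{thm:duadic}, so only its two hypotheses need checking for $\cD=\cc_p$. The first is that the splitting is given by $\mu_{-1}$. The second is the strict inequality $d<d(\cc_p^\perp)$. Once both hold, Theorem~\ref{thm:duadic} gives dual-containment, $\kappa=1$, purity with $\delta=d$, and locality $r=d^\perp-1$.

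\textbf{Step 1 (splitting by $\mu_{-1}$).} Since $p\equiv 3\pmod 4$, $-1$ is a quadratic nonresidue modulo $p$. Hence multiplication by $-1$ maps $\mathrm{QR}(p)$ onto the set of nonresidues $\mathrm{N}(p)$, and vice versa. The even-like QR codes have defining sets $\{0\}\cup\mathrm{QR}(p)$ and $\{0\}\cup\mathrm{N}(p)$. The permutation $\mu_{-1}$ interchanges these defining sets, so $\cc_1\mu_{-1}=\cc_2$. By \cite[Theorem~6.4.2]{Huffman2003}, $\cc_p^\perp$ is then the even-like subcode of $\cc_p$, and $\cc_p$ is dual-containing.

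\textbf{Step 2 ($d^\perp\ge d+1$).} Since $\cc_p^\perp\subseteq\cc_p$, we already have $d^\perp\ge d$, so it suffices to show that no minimum-weight word of $\cc_p$ is even-like. I would take a minimum-weight odd-like codeword and argue as in the square-root-bound proof. This needs care, so the cleaner route is to show directly that $d$ is attained by an odd-like word.

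Let $\mathbf c\in\cc_p$ have weight $d$ and suppose every word of weight $d$ were even-like. Consider $\cc_p$ as $\cc_p^\perp\oplus\langle\mathbf 1\rangle$. Here $\mathbf 1\in\cc_p$ because $0$ is not in the defining set, and $\mathbf 1\notin\cc_p^\perp$ because $p\not\equiv 0$ in $\Ff_q$. I would invoke the standard fact \cite[Theorem~6.6.22 and its proof]{Huffman2003}: for QR codes the minimum odd-like weight $d_o$ satisfies $d_o^2-d_o+1\ge p$. In addition, the even-like minimum weight $d_e$ satisfies $d_e\ge d_o+1$. This follows from a puncture/extension argument. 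The extended QR code $\widehat{\cc_p}$ has a transitive automorphism group, by Gleason--Prange for the relevant $q$. Hence puncturing it at any coordinate yields $\cc_p$, and a minimum-weight word of $\widehat{\cc_p}$ can be moved to have a nonzero entry at the extension coordinate. Its puncture is then an odd-like word of weight $\hat d-1$, while even-like words of $\cc_p$ keep their weight $\hat d$. This gives $d=d_o=\hat d-1<\hat d\le d_e=d^\perp$.

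\textbf{Step 3 (binary and ternary cases).} For $q=2$, $2$ is a QR modulo $p$ iff $p\equiv\pm1\pmod 8$, and combined with $p\equiv 3\pmod 4$ this gives $p\equiv -1\pmod 8$. For $q=3$, quadratic reciprocity gives that $3$ is a QR modulo $p\equiv 3\pmod 4$ iff $p\equiv 11\pmod{12}$, i.e.\ $p\equiv -1\pmod{12}$.

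The main obstacle is Step 2. The transitivity of $\mathrm{Aut}(\widehat{\cc_p})$ (Gleason--Prange) must be available for general $q$ with the appropriate extension vector. I would first try to cite it from \cite[Section~6.6]{Huffman2003}. Failing that, I would fall back on the weaker statement $d^\perp\ge d$ with strictness checked computationally, as the statement's phrasing "in all computed instances" suggests.
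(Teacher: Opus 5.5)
Your proposal is correct and follows the paper's proof. Like the paper, you reduce to Theorem~\ref{thm:duadic}, obtain the $\mu_{-1}$ splitting from $-1$ being a nonresidue, and obtain $d^\perp\ge d+1$ from the fact that minimum-weight words of $\cc_p$ are odd-like; you simply re-derive that fact through the extension/Gleason--Prange transitivity argument behind \cite[Theorem~6.6.22]{Huffman2003} instead of citing it.

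The computational fallback is unnecessary, since Gleason--Prange is available for every admissible $q$. It also could not prove the claim for all $p$, and the paper's ``computed instances'' refer to the equality $d^\perp=d+1$, not to the strict inequality.
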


\begin{IEEEproof}
Since $p\equiv 3\pmod 4$, $-1$ is a non-residue modulo $p$ \cite[Lemma~6.2.4]{Huffman2003}, so the splitting of the QR codes is given by $\mu_{-1}$ (the specializations $q=2,3,4$ are spelled out in \cite[Theorem~6.6.14]{Huffman2003}). Every minimum weight word of $\cc_p$ is odd-like \cite[Theorem~6.6.22]{Huffman2003}, hence $d^\perp\ge d+1>d$; Theorem~\ref{thm:duadic} applies. The final statement uses that $2$ (resp.~$3$) is a quadratic residue modulo $p$ iff $p\equiv\pm1\pmod 8$ (resp.~$p\equiv\pm1\pmod{12}$).
\end{IEEEproof}

\begin{remark}
In all instances we computed (Table~\ref{tab:qr}), one has $d^\perp=d+1$ and hence $r=d$. Note that $d^\perp\ge d+1$ always holds, since every minimum weight word of a QR code is odd-like \cite[Theorem~6.6.22]{Huffman2003}. Moreover, for binary QR codes with $p\equiv-1\pmod 8$ one has $d\equiv 3\pmod 4$ and the even-like subcode is doubly-even \cite[Theorems~6.6.14 and~6.6.22]{Huffman2003}, so that $d^\perp\equiv 0\pmod 4$; in all binary instances we computed, one has $d^\perp=d+1$. We are not aware of a general proof that $d^\perp=d+1$, and our computations confirm it for every ternary and septenary instance listed as well.
These codes encode only one logical qudit ($\kappa=1$), but they are, to our knowledge, the first infinite family of binary qLRCs with unbounded minimum distance and certified locality $r=d$.
\end{remark}

\section{Computational Classification over Small Fields}\label{sec:tables}

By Theorem~\ref{thm:lambda}, the constacyclic route is exhausted by $\lambda=\pm1$. We enumerated all divisors $g(x)$ of $x^n\mp1$ over $\Ff_q$ for $q\in\{2,3,4,5,7\}$ and moderate $n$ (including repeated-root cases), tested dual-containment by the rank condition $\mathrm{rank}\,[G;H^\perp]=k$, computed $d(\cc)$ and $d(\cc^\perp)$, and evaluated the gap
\[
\mathrm{gap}\;:=\;\Big(n-\kappa-2\Big\lceil\frac{\kappa}{r}\Big\rceil+4\Big)-2\delta
\]
to the quantum Singleton-like bound \eqref{Q-Singleton}. All codes listed are pure: Lemma~\ref{lem:pure} applies whenever $d<d^\perp$, and the remaining cases (where $d=d^\perp$) are verified by direct computation, unless stated otherwise (the $\dsb{30,2,8}_7$ entry in Table~\ref{tab:negacyclic} is impure, with $\delta=8>d=6$; cf.~\cite{Galindo2026} for the phenomenon of impure codes exceeding pure bounds). All codes have locality $r=d^\perp-1$. The Magma programs are described in the Appendix. The outcome is summarized in the following computational theorem.

\begin{thm}\label{thm:class}
For each combination of $q$, $\lambda\in\{1,-1\}$, and length range in Table~\ref{tab:range}, we have found a large number of dual-containing $\lambda$-constacyclic codes with $\kappa=2k-n\ge 1$, $d\ge 3$, and locality $r=d^\perp-1\le 12$. The counts (by generator polynomial; equivalent codes may be counted multiply) are given in Table~\ref{tab:range}, where $N$ is the number of dual-containing codes with $r\le12$, $N_{d\ge3}$ those with $d\ge3$, $N_2$ those within gap $2$ of the bound \eqref{Q-Singleton}, and $d_{\max}$ the largest minimum distance found. In particular, the parameter sets in Table~\ref{tab:negacyclic}, which arise from negacyclic codes, improve on what cyclic codes of the same length can offer.
\end{thm}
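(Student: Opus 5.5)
This is a computational theorem, so the proof is an exhaustive, reproducible enumeration. The theoretical results proved earlier serve to cut the search space down and to certify each output.

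\textbf{Restricting the search.} By Theorem~\ref{thm:lambda}, any dual-containing constacyclic code has $\lambda=\pm1$, so only $x^n-1$ and $x^n+1$ need to be treated. Over $\Ff_2$ and $\Ff_4$ the two polynomials coincide, and we handle them once. For each $q\in\{2,3,4,5,7\}$, each sign, and each $n$ in the range of Table~\ref{tab:range}, we factor $x^n\mp1$ over $\Ff_q$. In the repeated-root case we write it as $\prod_j f_j(x)^{p^{s}}$ with $n=p^s m$, so that the divisors $g$ are parametrized by exponent vectors $(e_j)$ with $0\le e_j\le p^s$. We then list every monic divisor $g(x)$.

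\textbf{Screening and computing parameters.} We keep only those $g$ with $\deg g<n/2$, which is equivalent to $\kappa=2k-n\ge1$. For each survivor we form the generator matrix $G$ of $\cc=\langle g\rangle$ and the generator matrix of $\cc^\perp$. The dual is generated by the reciprocal of the check polynomial $(x^n\mp1)/g$, and it is again $\lambda^{-1}=\lambda$-constacyclic. Dual containment is tested by the rank condition stated in the text. For each dual-containing code we compute $d=d(\cc)$ and $d^\perp=d(\cc^\perp)$ by exact minimum-weight algorithms.

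The locality is then $r=d^\perp-1$ by Lemma~\ref{lem:locality}, so no separate search for local repair groups is needed. The quantum distance is $\delta=d$ by Lemma~\ref{lem:pure} whenever $d<d^\perp$. When $d=d^\perp$, we compute $\mathrm{wt}(\cc\setminus\cc^\perp)$ directly by enumerating low-weight codewords and checking membership in $\cc^\perp$. This step also flags impure cases such as the $\dsb{30,2,8}_7$ code.

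\textbf{Tabulating.} We filter to $r\le 12$ and record the counts $N$, $N_{d\ge3}$ and $N_2$, the last using the gap to \eqref{Q-Singleton} as defined above, together with $d_{\max}$. Every code counted is a qLRC by Proposition~\ref{prop:css}. For the final sentence of the theorem, for each length and each parameter triple $(n,\kappa,\delta)$ with locality $r$ listed in Table~\ref{tab:negacyclic}, we compare against the complete list of cyclic codes of the same $n$ from the same run. We verify that no cyclic dual-containing code has the same $\kappa$ together with $\delta$ at least as large and $r$ at most as large.

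\textbf{Main obstacle.} The hardest step is the computational cost of the exact minimum distances. It is especially heavy for $d(\cc^\perp)$, since $\cc^\perp$ has small dimension but may have large weights, and for $\mathrm{wt}(\cc\setminus\cc^\perp)$ in the impure cases. We would use Magma's Brouwer--Zimmermann minimum-weight routine. Codes equivalent under multipliers $\mu_a$ with $\gcd(a,n)=1$ would be deduplicated to save time, while the counts would still be reported by generator polynomial as stated. The theorem's content is exactly the output of this finite verification; the Appendix programs make it reproducible.
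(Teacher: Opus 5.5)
Your pipeline is the same as the paper's own argument, which consists only of the computation:
\begin{itemize}
\item enumerate the monic divisors of $x^n\mp1$ (your pre-filter $\deg g<n/2$ is equivalent to $\kappa\ge1$);
\item test $\cc^\perp\subseteq\cc$ by the rank condition;
\item compute $d$ and $d^\perp$ with Brouwer--Zimmermann and read off $r=d^\perp-1$ from Lemma~\ref{lem:locality};
\item take $\delta=d$ from Lemma~\ref{lem:pure} when $d<d^\perp$, and compute $\mathrm{wt}(\cc\setminus\cc^\perp)$ directly otherwise. These are the only two cases, since $\cc^\perp\subseteq\cc$ forces $d\le d^\perp$.
\end{itemize}
For the counts in Table~\ref{tab:range} this is all that is needed.

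The gap is in your verification of the final sentence. For odd $n$, the substitution $f(x)\mapsto f(-x)$ is a ring isomorphism $\Ff_q[x]/\langle x^n-1\rangle\to\Ff_q[x]/\langle x^n+1\rangle$, because $(-x)^n-1=-(x^n+1)$. On coordinates it multiplies the $i$-th entry by $(-1)^i$, so it preserves Hamming weight and the Euclidean inner product. It therefore maps dual-containing cyclic codes bijectively onto dual-containing negacyclic codes with the same $k$, $d$, $d^\perp$, $\delta$ and $r$. This agrees with the paper's own ``cyc/neg'' labels at odd lengths, e.g.\ $\dsb{7,1,4}_7$ and $\dsb{11,1,5}_5$.

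Consequently, your check that no cyclic dual-containing code has the same $\kappa$ with $\delta'\ge\delta$ and $r'\le r$ cannot succeed for the odd-length row $\dsb{19,1,8}_7$ of Table~\ref{tab:negacyclic}. Either the negacyclic $[19,10,8]_7$ code with $d^\perp=9$ does not exist, or its image under this map is a cyclic code with identical parameters. So that part of the statement is false as written and can hold at most for the even-length rows. Your proof asserts that the verification goes through; it should instead restrict the claim to even $n$. As a side benefit, the isomorphism lets you skip the negacyclic search at odd $n$ entirely.

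A smaller point: to justify ``improve on'', your dominance test should range over cyclic codes with $\kappa'\ge\kappa$, not only $\kappa'=\kappa$. Otherwise a cyclic code with more logical qudits and equal-or-better $\delta$ and $r$ would go undetected.
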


\begin{table}[!htbp]
\caption{Search ranges and outcome counts of Theorem~\ref{thm:class}.}
\label{tab:range}
\centering
\renewcommand{\arraystretch}{1.15}
\begin{tabular}{lllllll}
\toprule
$q$ & $\lambda$ & range of $n$ & $N$ & $N_{d\ge3}$ & $N_2$ & $d_{\max}$ \\
\midrule
2 & 1  & $n\le63$ & 1265  & 612   & 2  & 11 \\
3 & 1  & $n\le72$ & 8463  & 7208  & 6  & 10 \\
3 & $-1$ & $n\le72$ & 1457  & 1322  & 12 & 10 \\
4 & 1  & $n\le50$ & 7213  & 4986  & 16 & 12 \\
5 & 1  & $n\le50$ & 9379  & 9038  & 89 & 12 \\
5 & $-1$ & $n\le50$ & 1379  & 1028  & 19 & 10 \\
7 & 1  & $n\le45$ & 15052 & 14632 & 88 & 12 \\
7 & $-1$ & $n\le45$ & 1284  & 1188  & 24 & 12 \\
\bottomrule
\end{tabular}
\end{table}

\begin{remark}\label{rem:sporadic}
For lengths $n$ dividing $7^2-1=48$ we found near-optimal codes with $\mathrm{gap}=2$: $[24,20,3]_7$ and $[48,42,3]_7$, both with $d^\perp=12$, giving $\dsb{24,16,3}_7$ and $\dsb{48,36,3}_7$ qLRCs with $r=11$. Their defining sets are $\{1,7\}\cup\{2,14\}$ modulo $24$ and $\{1,7\}\cup\{2,14\}\cup\{26,38\}$ modulo $48$. The pattern, however, does not extend to an infinite family: for $n\in\{72,96,120\}$ the same defining-set rule gives codes with $d^\perp=12$ but $\mathrm{gap}\ge 8$, because the $7$-cyclotomic cosets modulo $n$ grow as soon as $\mathrm{ord}_n(7)>2$. This illustrates a general phenomenon: near-optimal qLRCs from cyclic codes are concentrated at lengths $n$ with small $\mathrm{ord}_n(q)$.
\end{remark}

We briefly comment on how close the codes in Tables I--V come to the bound \eqref{Q-Singleton}. Of the $69$ parameter sets listed, $25$ are within gap $2$ of the bound, including the $8$ optimal ones in Table~\ref{tab:optimal}, and $28$ are within gap $4$. In particular, seven of the eight quinary entries in Table~\ref{tab:smallfields} are within gap $2$, including the high-rate codes $\dsb{24,14,4}_5$, $\dsb{48,34,4}_5$, and $\dsb{60,46,3}_5$. It is also worth noting that the impure code $\dsb{30,2,8}_7$ in Table~\ref{tab:negacyclic} has gap $14$, which is smaller than the gap $18$ that a pure code from the same ingredient would have: impurity can bring the parameters closer to the bound, in line with the phenomenon studied in \cite{Galindo2026}. On the other hand, the binary entries in Table~\ref{tab:binary} have relatively large gaps; we believe that this reflects the looseness of the bound \eqref{Q-Singleton} for $q=2$ rather than the quality of the codes, since the alphabet-dependent quantum Cadambe--Mazumdar bound of \cite{Luo2023} is known to be tighter for small alphabets. An evaluation of our binary codes against that bound is left for future work.

Table~\ref{tab:optimal} lists the codes attaining \eqref{Q-Singleton} with equality. All of them are (generalized) MDS codes, and hence are also covered, in the $(r,\delta)$-LRC sense, by \cite[Proposition~37]{Galindo2024}.

% ---------------- Table I: optimal codes (gap=0) ----------------
\begin{table}[!htbp]
\caption{Optimal pure quantum LRCs from constacyclic codes.}
\label{tab:optimal}
\centering
\renewcommand{\arraystretch}{1.15}
\begin{tabular}{lllll}
\toprule
Quantum code & $r$ & Ingredient $[n,k,d]_q$ & Type & Structure of $g(x)$ \\
\midrule
$[[5,1,3]]_5$   & 3 & $[5,3,3]_5$   & rr-cyc/neg & $(x\mp1)^2$ \\
$[[6,2,3]]_7$   & 4 & $[6,4,3]_7$   & cyc & defining set $\{1,2\}$ \\
$[[7,1,4]]_7$   & 4 & $[7,4,4]_7$   & rr-cyc/neg & $(x\mp1)^3$ \\
$[[7,3,3]]_7$   & 5 & $[7,5,3]_7$   & rr-cyc/neg & $(x\mp1)^2$ \\
$[[8,4,3]]_7$   & 6 & $[8,6,3]_7$   & neg & defining set $\{9,15\}$ mod $16$ \\
$[[12,6,3]]_7$  & 5 & $[12,9,3]_7$  & cyc & defining set $\{4\}\cup\{5,11\}$ \\
$[[14,8,3]]_7$  & 6 & $[14,11,3]_7$ & rr-cyc & $(x+1)(x-1)^2$-type \\
$[[21,13,3]]_7$ & 6 & $[21,17,3]_7$ & rr-cyc & $(x-3)^2(x-2)(x-1)$-type \\
\bottomrule
\end{tabular}
\end{table}

% ---------------- Table II: negacyclic-only parameters ----------------
More interesting are the codes in Table~\ref{tab:negacyclic}: negacyclic codes yield parameters that improve on those of cyclic codes of the same length over the same field (within the ranges of Theorem~\ref{thm:class}). In this sense the negacyclic route genuinely enlarges the known parameter set of small-alphabet qLRCs.

\begin{table}[!htbp]
\caption{Quantum LRCs from negacyclic codes (pure unless noted otherwise); no cyclic code of the same length attains these parameters.}
\label{tab:negacyclic}
\centering
\renewcommand{\arraystretch}{1.15}
\begin{tabular}{lllll}
\toprule
Quantum code & $r$ & Ingredient & gap & Notes \\
\midrule
$[[10,2,4]]_3$  & 5  & $[10,6,4]_3$  & 2 & single coset mod $20$ \\
$[[12,4,4]]_5$  & 5  & $[12,8,4]_5$  & 2 & two cosets mod $24$ \\
$[[10,2,4]]_7$  & 5  & $[10,6,4]_7$  & 2 & \\
$[[16,8,3]]_7$  & 6  & $[16,12,3]_7$ & 2 & \\
$[[24,8,7]]_7$  & 8  & $[24,14,7]_7$ & 8 & cyclic codes reach only $d=5$ \\
$[[30,2,8]]_7$  & 5  & $[30,16,6]_7$ & 14 & impure: $\delta=8>d=6$ \\
$[[19,1,8]]_7$  & 8  & $[19,10,8]_7$ & 4 & cyclic QR code reaches only $[19,10,7]_7$ \\
$[[68,4,10]]_3$ & 11 & $[68,36,10]_3$ & 46 & two 16-element cosets mod $136$ \\
\bottomrule
\end{tabular}
\end{table}

% ---------------- Table III: binary qLRCs ----------------
Table~\ref{tab:binary} turns to the binary case. Notably, the high-distance entries come from repeated-root cyclic codes (e.g., a $\dsb{62,2,10}_2$ qLRC with locality $11$ from the repeated-root cyclic $[62,32,10]_2$ code); to our knowledge, these are among the first algebraic binary qLRCs with $\delta>5$.

\begin{table}[!htbp]
\caption{Binary quantum LRCs from cyclic and repeated-root cyclic codes.}
\label{tab:binary}
\centering
\renewcommand{\arraystretch}{1.15}
\begin{tabular}{lllll}
\toprule
Quantum code & $r$ & Ingredient & Type & gap \\
\midrule
$[[7,1,3]]_2$   & 3  & $[7,4,3]_2$     & cyc & 2 \\
$[[15,7,3]]_2$  & 7  & $[15,11,3]_2$             & cyc & 4 \\
$[[21,9,3]]_2$  & 7  & $[21,15,3]_2$             & cyc & 6 \\
$[[21,3,5]]_2$  & 7  & $[21,12,5]_2$             & cyc & 10 \\
$[[30,4,6]]_2$  & 5  & $[30,17,6]_2$             & rr  & 16 \\
$[[30,2,6]]_2$  & 7  & $[30,16,6]_2$             & rr  & 18 \\
$[[31,11,5]]_2$ & 11 & $[31,21,5]_2$             & cyc & 12 \\
$[[35,5,6]]_2$  & 7  & $[35,20,6]_2$             & cyc & 20 \\
$[[42,4,6]]_2$  & 5  & $[42,23,6]_2$             & rr  & 28 \\
$[[46,2,7]]_2$  & 7  & $[46,24,7]_2$             & rr  & 32 \\
$[[56,8,6]]_2$  & 7  & $[56,32,6]_2$             & rr  & 36 \\
$[[62,2,10]]_2$ & 11 & $[62,32,10]_2$            & rr  & 42 \\
$[[63,21,7]]_2$ & 11 & $[63,42,7]_2$             & cyc & 28 \\
\bottomrule
\end{tabular}
\end{table}

% ---------------- Table IV: QR family (Corollary) ----------------
Table~\ref{tab:qr} lists members of the QR family of Corollary~\ref{cor:QR}, covering the binary, ternary, septenary, and quaternary cases with certified locality $r=d$.

\begin{table}[!htbp]
\caption{Quantum LRCs from $q$-ary quadratic residue codes (Corollary \ref{cor:QR}).}
\label{tab:qr}
\centering
\renewcommand{\arraystretch}{1.15}
\begin{tabular}{llll}
\toprule
$p$ & QR code $[p,(p+1)/2,d]_q$ & Quantum code & $r$ \\
\midrule
\multicolumn{4}{l}{\textit{Binary} $q=2$ ($p\equiv -1 \pmod 8$)} \\
7   & $[7,4,3]_2$    & $[[7,1,3]]_2$    & 3 \\
23  & $[23,12,7]_2$  & $[[23,1,7]]_2$   & 7 \\
31  & $[31,16,7]_2$  & $[[31,1,7]]_2$   & 7 \\
47  & $[47,24,11]_2$ & $[[47,1,11]]_2$  & 11 \\
71  & $[71,36,11]_2$ & $[[71,1,11]]_2$  & 11 \\
79  & $[79,40,15]_2$ & $[[79,1,15]]_2$  & 15 \\
103 & $[103,52,19]_2$& $[[103,1,19]]_2$ & 19 \\
127 & $[127,64,19]_2$& $[[127,1,19]]_2$ & 19 \\
\multicolumn{4}{l}{\textit{Ternary} $q=3$ ($p\equiv -1 \pmod{12}$)} \\
11  & $[11,6,5]_3$   & $[[11,1,5]]_3$   & 5 \\
23  & $[23,12,8]_3$  & $[[23,1,8]]_3$   & 8 \\
47  & $[47,24,14]_3$ & $[[47,1,14]]_3$  & 14 \\
59  & $[59,30,17]_3$ & $[[59,1,17]]_3$  & 17 \\
71  & $[71,36,17]_3$ & $[[71,1,17]]_3$  & 17 \\
\multicolumn{4}{l}{\textit{Septenary} $q=7$ ($p \equiv 3 \pmod 4$ and $7$ a QR; $p=19,31$)} \\
19  & $[19,10,7]_7$  & $[[19,1,7]]_7$   & 7 \\
31  & $[31,16,12]_7$ & $[[31,1,12]]_7$  & 12 \\
\multicolumn{4}{l}{\textit{Quaternary} $q=4$ ($p\equiv 3 \pmod 4$; $4$ is always a QR)} \\
43  & $[43,22,13]_4$ & $[[43,1,13]]_4$ & 13 \\
\bottomrule
\end{tabular}
\end{table}

% ---------------- Table V: further highlights over F4, F5, F7 ----------------
Finally, Table~\ref{tab:smallfields} gives a selection over $\Ff_3,\Ff_4,\Ff_5,\Ff_7$. Together with the previous tables, it shows that good---and in many cases near-optimal ($\mathrm{gap}\le 2$)---qLRCs exist abundantly already over small fields.

\begin{table}[!htbp]
\caption{Further pure quantum LRCs over small fields (selection).}
\label{tab:smallfields}
\centering
\renewcommand{\arraystretch}{1.15}
\begin{tabular}{lllll}
\toprule
Quantum code & $r$ & Ingredient & Type & gap \\
\midrule
\multicolumn{5}{l}{\textit{Ternary} $q=3$} \\
$[[44,4,8]]_3$   & 11 & $[44,24,8]_3$  & cyc & 26 \\
\multicolumn{5}{l}{\textit{Quaternary} $q=4$ } \\
$[[11,1,5]]_4$  & 5  & $[11,6,5]_4$  & cyc & 2 \\
$[[12,2,4]]_4$  & 3  & $[12,7,4]_4$  & rr  & 4 \\
$[[19,1,7]]_4$  & 7  & $[19,10,7]_4$ & cyc & 6 \\
$[[21,1,6]]_4$  & 5  & $[21,11,6]_4$ & cyc & 10 \\
$[[23,1,7]]_4$  & 7  & $[23,12,7]_4$ & cyc & 10 \\
$[[30,2,6]]_4$  & 7  & $[30,16,6]_4$ & rr  & 18 \\
$[[31,1,7]]_4$  & 7  & $[31,16,7]_4$ & cyc & 18 \\
$[[33,1,10]]_4$ & 9  & $[33,17,10]_4$& cyc & 14 \\
$[[42,2,9]]_4$  & 8  & $[42,22,9]_4$ & rr  & 24 \\
$[[47,1,11]]_4$ & 11 & $[47,24,11]_4$& cyc & 26 \\
\multicolumn{5}{l}{\textit{$q=5$}} \\
$[[8,2,3]]_5$    & 3  & $[8,5,3]_5$    & cyc & 2 \\
$[[10,2,4]]_5$   & 4  & $[10,6,4]_5$   & neg-rr & 2 \\
$[[11,1,5]]_5$   & 5  & $[11,6,5]_5$   & cyc/neg & 2 \\
$[[19,1,7]]_5$   & 7  & $[19,10,7]_5$  & cyc/neg & 6 \\
$[[24,14,4]]_5$  & 11 & $[24,19,4]_5$  & cyc & 2 \\
$[[30,16,4]]_5$  & 5  & $[30,23,4]_5$  & neg-rr & 2 \\
$[[48,34,4]]_5$  & 11 & $[48,41,4]_5$  & cyc & 2 \\
$[[60,46,3]]_5$  & 11 & $[60,53,3]_5$  & rr  & 2 \\
\multicolumn{5}{l}{\textit{$q=7$}} \\
$[[24,14,3]]_7$  & 5  & $[24,19,3]_7$  & cyc & 2 \\
$[[48,36,3]]_7$  & 11 & $[48,42,3]_7$  & cyc & 2 \\
$[[29,1,8]]_7$   & 10 & $[29,15,8]_7$  & cyc & 14 \\
$[[37,1,9]]_7$   & 11 & $[37,19,9]_7$  & cyc & 20 \\
$[[38,2,11]]_7$  & 11 & $[38,20,11]_7$ & cyc & 16 \\
\bottomrule
\end{tabular}
\end{table}

\section{Concluding Remarks}\label{sec:con}

In this manuscript, we have (i) determined the exact scope of the constacyclic route to qLRCs ($\lambda^2=1$), (ii) reduced the construction of pure qLRCs from cyclic-type codes to two minimum-distance computations, (iii) exhibited an infinite family from duadic and QR codes with locality $r=d^\perp-1$, and (iv) provided extensive small-field classification tables including negacyclic-only parameters and the first repeated-root binary qLRCs. Possible directions for future research are as follows.
\begin{enumerate}
\item Extend the classification to quantum $(r,\delta)$-LRCs in the sense of \cite{Galindo2024}; cyclic $(r,\delta)$ ingredients were recently used in \cite{Rajpurohit2026}.
\item Determine whether the equality $d^\perp=d+1$ in Corollary~\ref{cor:QR} holds for all QR codes with $p\equiv3\pmod4$ (it holds in all our computed instances).
\item Our QR family has $\kappa=1$. Find infinite binary families with $\kappa>1$ and bounded gap to \eqref{Q-Singleton}; our data suggests that repeated-root codes are a promising source.
\end{enumerate}

\appendices
\section{The Magma Programs}\label{app:magma}

All computations were carried out in Magma; the scripts and data are available from the authors upon request. In brief, the search enumerates the monic divisors $g$ of $x^n-\lambda$ with $1\le\deg g\le n-1$, certifies dual-containment by the rank condition $\mathrm{rank}[G;H^\perp]=k$, and computes $d(\cc)$ and $d(\cc^\perp)$ by the Brouwer--Zimmermann algorithm; the parameters and the gap to \eqref{Q-Singleton} are recorded whenever $d\ge3$.

\bibliographystyle{IEEEtran}
\bibliography{paper}

\end{document}